\newcommand{\Preprint}[2]{#1}
\newcommand{\Paper}{\Preprint{paper }{technical note }}
\newtheorem{definition}{Definition}
\newtheorem{property}{Property}
\newtheorem{theorem}{Theorem}
\newtheorem{lemma}{Lemma}
\newcommand{\Closure}[1]{\mathit{cl}(#1)}
\newcommand{\Frontier}[1]{{\partial\Closure{#1}}}
\newcommand{\Interior}[1]{\mathit{int}(#1)}
\newcommand{\Lie}[2]{\mathcal{L}_{#1} #2}
\newcommand{\Flow}{\varphi}
\newcommand{\Sign}[1]{\mathit{sgn}(#1)}
\newcommand{\Changed}[1]{#1}
\title{Converse Theorems\\for Safety and Barrier Certificates}
\author{Stefan Ratschan\thanks{Institute of Computer Science, Czech Academy of Sciences, ORCID: 0000-0003-1710-1513.}\thanks{The research published in this paper was supported by GA{\v C}R grant GA15-14484S and by the long-term strategic development financing of the Institute of Computer Science (RVO:67985807).
We thank Peter Franek for indispensable discussions throughout the work on this paper.}}
\begin{document}
\maketitle

\begin{abstract}
  An important tool for proving safety of dynamical systems is the notion of a barrier certificate. In this \Paper we prove that every robustly safe ordinary differential equation has a barrier certificate. Moreover, we show a construction of such a barrier certificate based on a set of states that is reachable in finite time. 

\end{abstract}

\section{Introduction}

An important property of a control system is safety, that is, the property that the system, when starting from a certain set of initial states, always stays in a set of safe states. The notion of a barrier certificate~\cite{Prajna:04,Taly:09} proves safety by separating the set of initial states from the set of states that are not safe using a transversality condition. This immediately opens the question whether every safe system has a barrier certificate. In this \Paper we give an affirmative answer for robustly safe ordinary differential equations. The construction is based on an inductive certificate for safety. Every safe system has such a certificate, but it is formed by a set of states reachable in unbounded time. 
As a second result of this paper, we show that a certain set of states reachable in finite time always  forms such an inductive certificate for safety, as well.

Converse theorems for barrier certificates have received attention since their introduction. Prajna and Rantzer~\cite{Prajna:05,Prajna:07} proved a converse theorem that holds under some Slater-like condition. Wisniewski and Sloth introduced the idea of exploiting robustness for such a converse theorem~\cite{Wisniewski:16}. Their result does not need the Slater-like condition, but is restricted to Morse-Smale vector fields. 

The result in this paper is based on robustness arguments, too. However, it neither needs a Slater-like condition, nor is restricted to Morse-Smale vector fields. Moreover, unlike the earlier results, which are based on measure theory and Morse-Smale theory, respectively, the proofs in this paper are based on basic mathematical tools, only.


\section{Problem Definition}

\begin{definition}
A \emph{safety verification problem} is a triple $(f, I, U)$ where
\begin{itemize}
\item $f: \mathbb{R}^n\rightarrow \mathbb{R}^n$, and $f$ is smooth,
\item $I\subseteq \mathbb{R}^n$ (the set of \emph{initial states}),  and
\item $U\subseteq \mathbb{R}^n$ (the set of \emph{unsafe states}).
\end{itemize}
\end{definition}

The vector field $f$ defines an ordinary differential equation whose solution we will describe using the flow of $f$:

\begin{definition}
\label{def:1} For a smooth $f: \mathbb{R}^n\rightarrow \mathbb{R}^n$, and $t\in \mathbb{R}$, we denote the state that system $\dot{x}=f(x)$ reaches after time $t$ from $x$ by   $\Flow_f(x, t)$. 
\end{definition}

Note that we also allow $t$ to be negative, and we have $\Flow(x, t)=y$ iff $\Flow(y, -t)=x$.

Based on the flow $\Flow$ we introduce the following notation for sets of reachable states:

\begin{definition}
\label{def:3}
For a smooth $f: \mathbb{R}^n\rightarrow \mathbb{R}^n$, for sets $X\subseteq \mathbb{R}^n$ and $T\subseteq \mathbb{R}$, $R_f^{T}(X):= \{ \Flow_f(x,t) \mid t\in T, x\in X\}$. For $t\in \mathbb{R}$, 
 $R_f^t(X):= R_f^{\{t\}}(X)$, and $R_f(X):= R_f^{\mathbb{R}^{\geq 0}}(X)$.
\end{definition}

Here, if clear from the context, we will drop the subscript $f$.

\begin{definition}
\label{def:4}
  A safety verification problem $(f, I, U)$ is \emph{safe} iff $R_f(I)\cap U=\emptyset$.
\end{definition}


In this paper, we are interested in objects that can serve as evidence that a given safety verification problem is safe. One possibility of such objects is the following:

\begin{definition}
\label{def:2}
A set $V\subseteq\mathbb{R}^n$ is a \emph{safety certificate} of a safety verification problem $(f, I, U)$ iff
 \begin{itemize}
 \item $I\subseteq V$
 \item $R_f(V)\subseteq V$
 \item $V\cap U=\emptyset$
 \end{itemize}
\end{definition} 

A set that fulfills the first two conditions is also called \emph{inductive invariant} by the verification community, and a \emph{positively invariant set} by the dynamical systems literature~\cite{Khalil:02,Hirsch:03}. 

Safety certificates serve as a proof of safety:
\begin{property}
\label{prop:correct}
If a safety verification problem $(f, I, U)$ has a safety certificate, then it is safe.  
\end{property}






Safety certificates are also complete, that is, there is a converse of Property~\ref{prop:correct}:

\begin{property}
\label{prop:converse}
  If a safety verification problem $(f, I, U)$ is safe, then it has a  safety certificate, $R_f^{\mathbb{R}^\geq 0}(I)$.
\end{property}


The definitions presented above are not robust against small changes of the safety verification problem. For example, a safety verification problem might have a safety certificate, but tiny changes of the safety verification problem might result in it not being safe. In order to avoid this, we use robust analogies of the definitions above.
Here we measure deviation from nominal behavior by the Euclidean norm, denoted by $|| \cdot ||$.

\begin{definition}
A function $x:\mathbb{R}^{\geq 0}\rightarrow\mathbb{R}^n$ is \emph{an $\varepsilon$-solution} of an ordinary differential equations $\dot{x}=f(x)$ iff for all $t\geq 0$,  $||f(x(t))-\dot{x}(t)|| \leq\varepsilon$. 
\end{definition}

We use the following robust versions of the reach sets introduced in Definition~\ref{def:3}:

\begin{definition}
For a smooth $f: \mathbb{R}^n\rightarrow \mathbb{R}^n$, for sets $X\subseteq \mathbb{R}^n$ and $T\subseteq \mathbb{R}$, 
$R_{f,\varepsilon}^{T}(X):= \{ x(t) \mid t\in T, x \text{ is an } \varepsilon\text{-solution of } f \text{ with } x(0)\in X \}$. For $t\in\mathbb{R}$, $R_{f,\varepsilon}^t(X):= R_{f,\varepsilon}^{\{t\}}(X)$, and $R_{f,\varepsilon}(X)= R_{f,\varepsilon}^{\mathbb{R}^{\geq 0}}$. 
\end{definition}

Again, if clear from the context, we will drop the subscript $f$.

Also Definition~\ref{def:4} has a robust version:\nopagebreak[3]%
\begin{definition}
A safety verification problem $(f, I, U)$ is \emph{robustly safe} iff there is an $\varepsilon>0$ such that $R_{f, \varepsilon}(I)\cap U=\emptyset$. We will call an $\varepsilon>0$ fulfilling this property a \emph{robustness margin} of $(f, I, U)$.
\end{definition}



We also provide a robust version of Definition~\ref{def:2}:
\begin{definition}
\label{def:5}
A set  $V\subseteq\mathbb{R}^n$ is an $\varepsilon$-\emph{robust  safety certificate} of a safety verification problem $(f, I, U)$ iff
 \begin{itemize}
 \item $I\subseteq V$
 \item $R_{f,\varepsilon}(V)\subseteq V$
 \item $V\cap U=\emptyset$
 \end{itemize}
\end{definition} 

We call a  safety certificate $V$ robust iff there is an $\varepsilon>0$ such that $V$ is a $\varepsilon$-robust safety certificate.

There is a robust version of Property~\ref{prop:correct}:
\begin{property}
\label{prop:correct_robust}
If a safety verification problem $(f, I, U)$ has a robust safety certificate, then it is robustly safe.  
\end{property}

Also completeness holds in analogy to Property~\ref{prop:converse}, that is, there is a converse of Property~\ref{prop:correct_robust}:

\begin{property}
\label{prop:converse_robust}
  If a safety verification problem $(f, I, U)$ is robustly safe, then it has a  robust safety certificate, $R^{\mathbb{R}^{\geq 0}}_{f,\varepsilon}(I)$, where $\varepsilon$ is a robustness margin of $(f, I, U)$.
\end{property}

Safety certificates have the disadvantage that checking them still needs the computation of a reachable set. This can be avoided by the following definition.

\begin{definition}[Prajna and Jadbabaie~\cite{Prajna:04}]
\label{def:6}
 A differentiable function  $\beta:\mathbb{R}^n\rightarrow\mathbb{R}$ is a \emph{barrier certificate} of a safety verification problem $(f, I, U)$ iff
\begin{itemize}
\item $\forall x \;.\; x\in I\Rightarrow \beta(x)\geq 0$
\item $\forall x \;.\; \beta(x)=0\Rightarrow L_f(\beta)(x)>0$, that is $(\nabla \beta(x))^T f(x) >0$
\item $\forall x \;.\; \beta(x)\geq 0\Rightarrow x\not\in U$
\end{itemize}
\end{definition}

Also barrier certificates serve as a proof of safety:
\begin{property}
\label{prop:correct_barrier}
If a safety verification problem $(f, I, U)$ has a barrier certificate, then it is safe.  
\end{property}

Converse theorems for this property are an active area of research~\cite{Prajna:05,Prajna:07,Wisniewski:16}, and at the same time the main topic of this paper. 

Summarizing, Definitions~\ref{def:2},~\ref{def:5}, and~\ref{def:6} provide three different forms of certificates for safety verification problems, as ensured by corresponding Properties~\ref{prop:correct},~\ref{prop:correct_robust}, and~\ref{prop:correct_barrier}, respectively. 
The contributions of this paper are converse theorems for the latter two cases. The state of the art is: 
\begin{itemize}
\item For Definition~\ref{def:5}, there is a very simple converse theorem of the corresponding Property~\ref{prop:correct_robust}. \Changed{This converse theorem is Property~\ref{prop:converse_robust} which has the disadvantage of being based on an infinite-time reach set.}
\item For Definition~\ref{def:6}, it is known that the literal converse of corresponding Property~\ref{prop:correct_barrier} does not hold, that is, that is, there are safety verification problems that are safe, but do not have a barrier certificate~\cite[Example 3]{Taly:09}. Still, there are partial converse theorems~\cite{Prajna:07,Wisniewski:16}. The most general result~\cite{Wisniewski:16} proves a converse for Morse-Smale vector fields under the assumption of robustness.
\end{itemize}

The main result of this paper is a converse theorem of Property~\ref{prop:correct_barrier} that still assumes robustness, but is not restricted to Morse-Smale vector fields:

\begin{theorem}
\label{thm:3}
  Assume a safety verification problem $(f, I, U)$ that is robustly safe, such that the closures of $I$ and $U$ are  disjoint, and the set of safe states $\mathbb{R}^n\setminus U$ is bounded. Then the safety verification problem has a  barrier certificate.
\end{theorem}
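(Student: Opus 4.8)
The plan is to construct the barrier certificate from the robust reachable set rather than from the exact one. By Property~\ref{prop:converse_robust}, if $\varepsilon$ is a robustness margin, then $V := R^{\mathbb{R}^{\geq 0}}_{f,\varepsilon}(I)$ is an $\varepsilon$-robust safety certificate: $I \subseteq V$, $R_{f,\varepsilon}(V) \subseteq V$, and $V \cap U = \emptyset$. The key qualitative gain from using the $\varepsilon$-robust reach set instead of the exact one is a ``thickening'': for every $y \in V$, the whole Euclidean ball of some positive radius around $y$ (traversable by an $\varepsilon$-solution in short time) is again in $V$ — more precisely, $V$ contains an open neighbourhood of $\Closure{V}$ in the flow direction, so that trajectories starting near $\Frontier{V}$ that point ``inward'' cannot escape, and the boundary $\Frontier{V}$ is uniformly transversal to $f$: there is $\delta > 0$ with $(\nabla d)(x)^T f(x) \geq \delta$ (in an appropriate one-sided sense) for $x$ near $\Frontier{V}$, where $d$ is the signed distance to $V$. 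This is where robustness does the real work, and I expect establishing this uniform transversality-with-a-margin statement to be the main obstacle, since $V$ is merely a set, a priori with no smoothness, so ``$\nabla d$'' and ``$L_f \beta$'' have to be handled carefully.

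The construction of $\beta$ then proceeds by mollification. First I would take the signed distance function $d_V(x) = \mathrm{dist}(x, V) - \mathrm{dist}(x, \mathbb{R}^n \setminus V)$ (or a truncated/clamped variant, using that $\mathbb{R}^n \setminus U$ is bounded so everything happens in a compact region), which is Lipschitz and, by the thickening property, strictly decreasing along $\varepsilon$-solutions in a neighbourhood of $\Frontier{V}$ at a rate bounded away from zero. Then I would convolve $-d_V$ with a smooth mollifier $\rho_\sigma$ of small support $\sigma \ll \varepsilon$, obtaining a smooth $\beta := (-d_V) * \rho_\sigma$. Because $f$ is smooth hence locally Lipschitz, and the mollification radius is chosen much smaller than the robustness margin, every chord of length $\sigma$ is an $\varepsilon$-solution segment, so the averaged derivative $L_f\beta(x) = \int (\nabla(-d_V))(x - z)^T f(x)\, \rho_\sigma(z)\, dz$ inherits the strict sign $\geq \delta/2 > 0$ wherever $|\beta(x)|$ is small; and the uniform continuity of $f$ on compacta lets me also control the error $f(x) - f(x-z)$. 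Finally I would verify the three bullet conditions of Definition~\ref{def:6}: on $I$ we have $d_V \leq 0$ so $\beta \geq 0$ after choosing $\sigma$ small using that $\Closure{I}$ lies strictly inside $V$ (here disjointness of $\Closure{I}$ and $\Closure{U}$, together with $I \subseteq R_{f,\varepsilon}(I)$, gives a positive gap); on $U$ we have $d_V \geq 0$ strictly, again with a gap, so $\beta < 0$; and the zero set $\{\beta = 0\}$ lies in the thin shell around $\Frontier{V}$ where the transversality estimate holds, giving $L_f\beta > 0$ there.

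The technical core, and the step I expect to fight with, is making the ``robust reach set thickens uniformly'' claim precise and quantitative: I need constants $r > 0$ (spatial thickening) and $\delta > 0$ (transversality rate) that are uniform over the compact relevant region, derived purely from $\varepsilon$, the Lipschitz/continuity modulus of $f$, and the $\varepsilon$-solution definition — no Morse-Smale structure, no Slater point. The idea is: if $y \in V$ and $z$ is within $r$ of $y$, then there is an $\varepsilon$-solution from some point of $I$ reaching $y$, which we can perturb (adding a bounded correction of norm $\leq \varepsilon$ for a short time) to instead reach $z$, so $z \in V$; quantifying how large $r$ can be in terms of $\varepsilon$ and a bound on $\|f\|$ over the compact set is the crux, and it simultaneously yields the transversality margin because ``$d_V$ decreases by at least $r$ along short $\varepsilon$-solution segments near the boundary'' is exactly $(\nabla d_V)^T f \lesssim -r/(\text{time})$. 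Once that is in hand, the mollification and the three-bullet verification are routine bookkeeping with $\sigma$ chosen small relative to $r$, $\delta$, and the $I$–$U$ gaps.
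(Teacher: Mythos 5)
Your route is genuinely different from the paper's. The paper does not mollify a distance function: it defines the \emph{time-to-boundary} function $\nu(x)=-t$ where $t$ is the unique time with $\Flow_f(x,t)\in\Frontier{V}$ (uniqueness being Lemma~\ref{lem:14}, a consequence of robustness), observes that $\Lie{f}{\nu}\equiv 1$ exactly, proves continuity of $\nu$ near $\Frontier{V}$ (Lemma~\ref{lem:16}, again from robustness via the quantitative controllability Lemma~\ref{lem:1}), and then smooths $\nu$ with Wilson's theorem, which preserves the Lie derivative up to $\varepsilon$; a partition of unity extends the result globally. Your signed-distance route trades the continuity proof for an a.e.\ transversality estimate and trades Wilson's theorem for ordinary convolution. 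Both hinge on the same mechanism (small-time controllability under $\varepsilon$-perturbation makes the reach tube open up at rate proportional to $\varepsilon t$), and I believe your route can be completed, but as written it has two concrete gaps.

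First, the perturbation budget. You justify the thickening by taking the $\varepsilon$-solution from $I$ to $y$ and ``adding a bounded correction of norm $\leq\varepsilon$''; the result is a $2\varepsilon$-solution, so it shows $z\in R_{f,2\varepsilon}(I)$, not $z\in V=R_{f,\varepsilon}(I)$. This is exactly why the paper splits the budget everywhere ($\frac{\mu}{2}$ for the reference trajectory, the remaining $\frac{\mu}{2}$ for steering, as in Lemma~\ref{lem:1} and Theorem~\ref{thm:1}). The clean fix for your argument is not to perturb the incoming solution at all: use forward invariance $R_{f,\varepsilon}(V)\subseteq V$ and start a \emph{fresh} $\varepsilon$-solution at $y\in V$, so that $R^{t}_{f,\varepsilon}(\{y\})\subseteq V$ contains a ball of radius $c\varepsilon t$ around $\Flow_f(y,t)$. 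Note that this corrected statement is weaker than what you claim: it gives a ball around the \emph{forward image} $\Flow_f(y,t)$, not around $y$ itself. $V$ need not contain any ball around points of $V$ reached at time $0$ (take $f$ constant with $\|f\|>\varepsilon$ and $I$ a single point: $V$ is a cone with apex at that point). Consequently your claim that $\Closure{I}$ lies strictly inside $V$ does not follow from ``disjointness of $\Closure{I}$ and $\Closure{U}$ together with $I\subseteq R_{f,\varepsilon}(I)$'' --- that only gives $I\subseteq V$, and points of $I$ can sit on $\Frontier{V}$, in which case the mollified $-d_V$ is negative there and the first barrier condition fails. You need a separate argument (e.g.\ replacing $I$ by a small neighborhood $N(I)$ and showing $(f,N(I),U)$ is still robustly safe with a smaller margin) to create the gap between $I$ and $\Frontier{V}$.

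Second, the transversality estimate must hold on \emph{both} sides of $\Frontier{V}$, since the zero set of the mollified $\beta$ straddles it. Forward invariance plus controllability gives $d_V(\Flow_f(y,t))\leq -c\varepsilon t$ only for $y\in\Closure{V}$; for $x$ just \emph{outside} $\Closure{V}$ you must compare $\Flow_f(x,t)$ with $\Flow_f(y,t)$ for the nearest $y\in\Closure{V}$ via Gronwall, which yields $d_V(\Flow_f(x,t))\leq e^{Lt}d_V(x)-c\varepsilon t$ and hence a uniform decrease rate only on a shell of width $O(\varepsilon/L)$. This is doable but is an extra step your sketch omits, and it is what fixes the admissible shell width and hence the mollification radius $\sigma$. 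With these two repairs (budget/forward-image thickening, and the two-sided shell estimate), the rest of your plan --- Rademacher, $\nabla\beta=\nabla(-d_V)*\rho_\sigma$, the $f(x)$ versus $f(x-z)$ error controlled by uniform continuity on the compact $\Closure{\mathbb{R}^n\setminus U}$ --- is sound.
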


The second result is the following converse of Property~\ref{prop:converse_robust} that is based on a finite time reach set:

\begin{theorem}
\label{thm:1}
  If a safety verification problem $(f, I, U)$ is robustly safe with robustness margin~$\varepsilon$ and the set of safe states $\mathbb{R}^n\setminus U$ is bounded, then for all $\Delta>0$ there is a $t\geq 0$ such that  $R_{f,\frac{\varepsilon}{2}}^{[0,\Delta]}(R_{f,\varepsilon}^{[0,t]}(I))$ is an $\frac{\varepsilon}{2}$-robust safety certificate. 
\end{theorem}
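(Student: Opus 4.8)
The plan is to choose $t$ large and verify the three clauses of Definition~\ref{def:5} for $V:=R_{f,\frac{\varepsilon}{2}}^{[0,\Delta]}(A)$ with $A:=R_{f,\varepsilon}^{[0,t]}(I)$. Two of the clauses are cheap and hold for every $t$ and $\Delta$. We have $I\subseteq A\subseteq V$, since the exact flow $\Flow_f(x,\cdot)$ is a $0$-deviation solution — hence both an $\varepsilon$- and an $\frac{\varepsilon}{2}$-solution — evaluated at time $0$. And $V\cap U=\emptyset$, since the concatenation of an $\varepsilon$-solution with an $\frac{\varepsilon}{2}$-solution (after smoothing the joint, a routine technicality) is again an $\varepsilon$-solution, so $V\subseteq R_{f,\varepsilon}^{[0,t+\Delta]}(I)\subseteq R_{f,\varepsilon}(I)$, which is disjoint from $U$ because $\varepsilon$ is a robustness margin. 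Everything therefore rests on the invariance clause $R_{f,\frac{\varepsilon}{2}}(V)\subseteq V$.

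I would first reduce this. Since every point of $V$ is reached from $A$ by an $\frac{\varepsilon}{2}$-solution of duration at most $\Delta$, concatenation gives $R_{f,\frac{\varepsilon}{2}}(V)\subseteq R_{f,\frac{\varepsilon}{2}}(A)$, so it suffices to prove
\[
  R_{f,\frac{\varepsilon}{2}}\!\bigl(R_{f,\varepsilon}^{[0,t]}(I)\bigr)\ \subseteq\ R_{f,\frac{\varepsilon}{2}}^{[0,\Delta]}\!\bigl(R_{f,\varepsilon}^{[0,t]}(I)\bigr)
\]
for a suitable $t$. Two preliminary facts set the stage. \emph{(a)} Robust safety together with boundedness of $\mathbb{R}^n\setminus U$ forces $R_{f,\varepsilon}(I)\subseteq\mathbb{R}^n\setminus U$, so $R_{f,\varepsilon}(I)$ is bounded; fix a compact $K\supseteq R_{f,\varepsilon}(I)$ and a Lipschitz constant $L$ for $f$ on a neighbourhood of $K$. \emph{(b)} The reach set stabilizes up to any resolution: for every $\delta>0$ there is a $t$ with every point of $R_{f,\varepsilon}(I)$ within distance $\delta$ of $R_{f,\varepsilon}^{[0,t]}(I)$ — because the open $\delta$-neighbourhoods of $R_{f,\varepsilon}^{[0,t']}(I)$, $t'\ge 0$, form an increasing open cover of the compact set $\overline{R_{f,\varepsilon}(I)}$, so one of them already covers it.

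Now fix $\delta>0$ small (how small is dictated by the splicing estimate below) and the corresponding $t$ from \emph{(b)}. Take $p\in R_{f,\frac{\varepsilon}{2}}(A)$, witnessed by an $\frac{\varepsilon}{2}$-solution $y$ with $y(0)=a\in A$ and $p=y(s)$; prepending an $\varepsilon$-solution that reaches $a$ at some time $t_1\le t$ yields a concatenation $w$ that is an $\varepsilon$-solution started in $I$, with deviation at most $\frac{\varepsilon}{2}$ on $[t_1,t_1+s]$ and $w(t_1+s)=p$. If $s\le\Delta$, then $y$ itself shows $p\in R_{f,\frac{\varepsilon}{2}}^{[0,\Delta]}(\{a\})\subseteq V$. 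If $t_1+s\le t+\Delta$, then $w(t)\in A$ and the tail $w|_{[t,\,t_1+s]}$ is an $\frac{\varepsilon}{2}$-solution of duration at most $\Delta$ taking $w(t)$ to $p$, so again $p\in V$ (and if $t_1+s\le t$ then $p\in A$ directly). The remaining case $t_1+s>t+\Delta$ is the substantial one: $w$ then spends more than $t+\Delta$ units of time inside the compact set $K$, so by a pigeonhole argument on a finite $\tfrac{\delta}{2}$-cover of $K$ the curve repeatedly re-enters $\delta$-balls around its own past, and the $\frac{\varepsilon}{2}$-deviation portion $w|_{[t_1,\,t_1+s-\Delta]}$ can be shortened by excising such loops. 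Each excision is realized by a ``decaying-correction'' homotopy whose extra deviation is of order $\delta$ (with constants involving $L$), which is absorbed by the $\frac{\varepsilon}{2}$ of deviation still available in that portion, so the curve stays an $\varepsilon$-solution, while its untouched terminal segment of duration $\Delta$ remains an $\frac{\varepsilon}{2}$-solution ending at $p$. Shortening that portion down to length at most $t-t_1$ produces an $\varepsilon$-solution from $I$ reaching $q:=w(t_1+s-\Delta)$ within time $t$, so $q\in A$, and therefore $p\in R_{f,\frac{\varepsilon}{2}}^{[0,\Delta]}(\{q\})\subseteq V$.

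I expect the real work to be concentrated in this shortening step. One must arrange the loop excisions so that the modified curve still passes through $q$ \emph{exactly} and is a genuine smooth solution, choosing the finitely many correction terms so that they do not accumulate beyond the available $\frac{\varepsilon}{2}$ of slack; and one must ensure that \emph{enough} total loop-length can be removed, which fixes how large $t$ has to be chosen relative to the $\delta$-covering number of $K$ — the delicate point being the case where $t_1$ is itself close to $t$, i.e.\ where $a$ sits near the frontier of the reach set, which requires an additional appeal to fact \emph{(b)}. Conceptually the argument is a quantitative form of the statement that beyond time $t$ the reach set can grow by only a vanishing amount and that this residual growth is precisely what a single $\frac{\varepsilon}{2}$-flow of duration $\Delta$ recovers; the strict inequality $\frac{\varepsilon}{2}<\varepsilon$ is what makes that residual motion realizable without having to cancel the drift of $f$.
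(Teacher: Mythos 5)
Your setup is sound: the clauses $I\subseteq V$ and $V\cap U=\emptyset$ do hold for every $t$ by the concatenation arguments you give, and reducing the invariance clause to $R_{f,\frac{\varepsilon}{2}}(A)\subseteq R_{f,\frac{\varepsilon}{2}}^{[0,\Delta]}(A)$ with $A:=R_{f,\varepsilon}^{[0,t]}(I)$ is a correct target (it is equivalent to what the paper gets from Lemma~\ref{lem:2} plus the containment $R^{\Delta}_{f,\frac{\varepsilon}{2}}(A)\subseteq A$). The gap is in how you propose to hit that target, and it sits exactly where you flag the ``delicate point''. For $p=y(s)$ with $s$ large you must place $q:=y(s-\Delta)$ in $A$ \emph{exactly}, because $V$ is built from $\frac{\varepsilon}{2}$-solutions of length at most $\Delta$ and the terminal segment has no slack left. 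Your fact \emph{(b)} only yields that $q$, which lies in $R_{f,\varepsilon}(I)$, is within $\delta$ of $A$; upgrading ``$\delta$-close to $A$'' to ``$p$ reachable from $A$'' costs control authority --- it is precisely the content of Lemma~\ref{lem:1} that this turns $\frac{\varepsilon}{2}$ into $\varepsilon$ --- so it lands $p$ in $R^{[0,\Delta]}_{f,\varepsilon}(A)$, not in $V$. The loop-excision alternative also fails in that case: when $a$ is freshly reached, i.e.\ $t_1$ is close to $t$, no amount of excision brings the duration of the segment from $a$ to $q$ below the minimum travel time, roughly $\|q-a\|/(\sup_K\|f\|+\varepsilon)$, which can exceed $t-t_1$. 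More fundamentally, choosing $t$ in advance by an approximation property of the reach set cannot work: the $\varepsilon$-reach set may keep growing strictly for all $t$, and what has to be shown is that the $\frac{\varepsilon}{2}$-dynamics eventually cannot escape $A$, not that $A$ is metrically close to the infinite-time reach set.

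The paper closes exactly this hole with a different mechanism. It assumes that $R^{\Delta}_{f,\frac{\varepsilon}{2}}(R_{f,\varepsilon}^{[0,t]}(I))\not\subseteq R_{f,\varepsilon}^{[0,t]}(I)$ for \emph{every} $t$ and derives a contradiction: each such failure, combined with the uniform controllability bound of Lemma~\ref{lem:1} over the compact closure of $\mathbb{R}^n\setminus U$, forces the time-$(t+\Delta)$ $\varepsilon$-reach set to contain a full $\delta$-ball around a point outside the time-$t$ reach set; iterating produces infinitely many pairwise $\delta$-separated points in a bounded set. The good $t$ is thus extracted by a packing argument rather than by stabilization-up-to-$\delta$, and the unavoidable $\frac{\varepsilon}{2}$-to-$\varepsilon$ upgrade is spent where it is harmless (inflating the $\varepsilon$-reach set, which is allowed to be as large as the bounded safe region) rather than where it is fatal (the final $\Delta$-segment of the certificate). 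If you want to keep a direct argument, you would need to replace \emph{(b)} by a statement of that strength; as written, the proposal does not go through.
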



In the following section we first summarize some facts that directly follow from the literature. Then, in Section~\ref{sec:converse} we prove Theorem~\ref{thm:3}, and in Section~\ref{sec:finite-time}, Theorem~\ref{thm:1}. Section~\ref{sec:conclusion} concludes the paper.

\section{Preliminaries}
\label{sec:preliminaries}

In this section we summarize some facts that directly follow from the literature. We first relate the notion of an $\varepsilon$-solution to controllability:

\begin{property}
\label{prop:1}
Given a smooth $f:\mathbb{R}^n\rightarrow\mathbb{R}^n$ and $\Delta>0$, if there is a control input $u: [0, \Delta]\rightarrow [-\varepsilon,\varepsilon]^n$ s.t. $\dot{x}=f(x)+u(t)$ has a solution with $x(0)=p$ and $x(\Delta)=p'$, then $p'\in R^{\Delta}_{f, \varepsilon}(\{p\})$.
\end{property}

Hence we can apply results from controllability to our context. The following property directly follows from this:

\begin{property}
\label{prop:control}
Given an ODE $\dot{x}=f(x)$ with smooth $f:\mathbb{R}^n\rightarrow\mathbb{R}^n$, $\varepsilon$, $\hat{\varepsilon}$ with $0\leq\varepsilon<\hat{\varepsilon}$, $\Delta>0$,
$p$ and $p'$ such that $p'\in R^{\Delta}_{f, \varepsilon}(\{p\})$, there is a  $\delta>0$ such that  every point in the $\delta$-neighborhood of $p'$ is in $R_{f, \hat{\varepsilon}}^\Delta(\{p\})$.
\end{property}

The usual proof of (a generalization of) this property~\cite[Proposition 3.3]{Nijmeijer:90},~\cite[Proposition 11.2]{Sastry:99} uses the inverse function theorem to map each point $p^*$ in the neighborhood of $p'$ to a vector 
$\xi_1,\dots,\xi_n$ that generates an input of the form needed by Property~\ref{prop:1} to steer the system into  $p^*$. Due to a version of the inverse function theorem~\cite[Theorem 2.9.4]{Hubbard:01} that bounds the size of this neighborhood from below based on a Lipschitz constant for the derivative of the given function, the size $\delta$ of the neighborhood in Property~\ref{prop:control} can be bounded from below over all elements $p$ and $p'$ of a compact set:

\begin{lemma}
\label{lem:1}
Given an ODE $\dot{x}=f(x)$ with smooth $f:\mathbb{R}^n\rightarrow\mathbb{R}^n$, and a compact set $\Omega\subseteq\mathbb{R}^n$. Then, for every $\varepsilon>0$, $\Delta>0$ there is a $\delta>0$ such that for all $p, p'\in\Omega$ such that
 $p'\in \Closure{R^\Delta_{f,\frac{\varepsilon}{2}}(\{p\})}$, every point in the $\delta$-neighborhood of $p'$ is in $R^\Delta_{f,\varepsilon}(\{p\})$.
\end{lemma}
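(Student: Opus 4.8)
The plan is to reexamine the controllability construction behind Property~\ref{prop:control} and to observe that, when $p$ and $p'$ range over a compact set, the neighborhood it produces can be taken of one fixed radius; the closure in the hypothesis is then removed by an elementary approximation.

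First I would enlarge $\Omega$: put $\Omega^+:=\{x\mid\mathrm{dist}(x,\Omega)\le 1\}$, again compact. The core is a \emph{uniform} form of Property~\ref{prop:control}: for every $\varepsilon>0$ and $\Delta>0$ there is a \emph{single} $\delta^+>0$ such that whenever $p,p'\in\Omega^+$ and $p'\in R^\Delta_{f,\varepsilon/2}(\{p\})$, the $\delta^+$-neighborhood of $p'$ lies in $R^\Delta_{f,\varepsilon}(\{p\})$. Granting this, Lemma~\ref{lem:1} follows with $\delta:=\tfrac12\min(\delta^+,1)$: given $p,p'\in\Omega\subseteq\Omega^+$ with $p'\in\Closure{R^\Delta_{f,\varepsilon/2}(\{p\})}$, pick $p''\in R^\Delta_{f,\varepsilon/2}(\{p\})$ with $||p''-p'||<\delta$ (possible by definition of the closure); then $p''\in\Omega^+$, the $\delta^+$-neighborhood of $p''$ lies in $R^\Delta_{f,\varepsilon}(\{p\})$ by the uniform form, and the $\delta$-neighborhood of $p'$ lies inside it since $\delta+||p''-p'||<2\delta\le\delta^+$.

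For the uniform form I would follow the standard construction. Fix $p,p'\in\Omega^+$ and an $\varepsilon/2$-solution $\gamma$ on $[0,\Delta]$ with $\gamma(0)=p$, $\gamma(\Delta)=p'$, and set $u_0:=\dot\gamma-f\circ\gamma$, so $u_0(t)\in[-\varepsilon/2,\varepsilon/2]^n$. For a fixed smooth cutoff $\chi_\tau$ supported in $[\Delta-\tau,\Delta]$ with values in $[0,1]$ ($\tau\in(0,\Delta]$ to be chosen), and for $||\xi||\le\varepsilon/2$, let $x_\xi$ solve $\dot x=f(x)+u_0(t)+\xi\chi_\tau(t)$, $x_\xi(0)=p$; since $u_0(t)+\xi\chi_\tau(t)\in[-\varepsilon,\varepsilon]^n$, Property~\ref{prop:1} gives $F(\xi):=x_\xi(\Delta)\in R^\Delta_{f,\varepsilon}(\{p\})$, with $F(0)=p'$, and $F$ depends smoothly on $\xi$. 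The decisive point is that the only portion of $\gamma$ entering this construction, the terminal arc $\gamma|_{[\Delta-\tau,\Delta]}$, remains in a \emph{fixed} compact set, no matter how far $\gamma$ wanders earlier: since $\gamma(\Delta)=p'\in\Omega^+$ and $||\dot\gamma||\le M:=\sup_{\Omega^{++}}||f||+\varepsilon$ while $\gamma\in\Omega^{++}:=\{x\mid\mathrm{dist}(x,\Omega^+)\le 1\}$, a no-escape estimate traps $\gamma|_{[\Delta-\tau,\Delta]}$, and likewise each $x_\xi|_{[\Delta-\tau,\Delta]}$, in a fixed compact enlargement of $\Omega^+$ once $\tau$ is small. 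On that compact set $||Df||$ and $||D^2f||$ are bounded by constants $K,K_2$ depending only on $\Omega,f,\varepsilon$. Writing $c:=\int_{\Delta-\tau}^\Delta\chi_\tau(s)\,ds$ and $DF(0)=\int_{\Delta-\tau}^\Delta\Phi(\Delta,s)\chi_\tau(s)\,ds$ with $\Phi$ the state-transition matrix of the linearization along $\gamma$, one gets $||DF(0)-c\,\mathrm{Id}||\le cK\tau e^{K\tau}$; hence for $\tau$ small (depending only on $K$), $DF(0)$ is invertible with $||DF(0)^{-1}||$ controlled by $\tau,K$, and $DF$ is Lipschitz on $\{||\xi||\le\varepsilon/2\}$ with a constant controlled by $\tau,K,K_2$. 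Fixing such a $\tau$ (also $\le\Delta$) and applying to $F$ the quantitative inverse function theorem~\cite{Hubbard:01} invoked above yields a radius $\delta^+>0$ depending only on $\varepsilon,\tau,K,K_2$ — hence on none of $p,p',\gamma$ — such that $F(\{||\xi||\le\varepsilon/2\})$ contains the $\delta^+$-neighborhood of $p'$; this establishes the uniform form.

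The main obstacle is exactly this uniformity: one must check that the two data on which the quantitative inverse function theorem makes $\delta^+$ depend — the invertibility margin of $DF(0)$ and a Lipschitz bound for $DF$ — stay bounded over all admissible triples $(p,p',\gamma)$, and the confinement of the terminal arc is what decouples these bounds from the (possibly unbounded) global behaviour of $\gamma$. A minor, technical point is the regularity bookkeeping forced by the definition of an $\varepsilon$-solution (differentiability at every $t$); routing the conclusion through Property~\ref{prop:1} and using smooth dependence of solutions on the parameter $\xi$, as above, makes it harmless.
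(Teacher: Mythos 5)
Your proposal is correct and takes essentially the same route as the paper: the paper gives no formal proof of Lemma~\ref{lem:1}, only the preceding paragraph's sketch (the controllability construction of Nijmeijer--van der Schaft/Sastry combined with the quantitative inverse function theorem of Hubbard--Hubbard to make the neighborhood radius uniform over a compact set), and your argument is a faithful, more detailed realization of exactly that sketch. The two details you supply beyond the paper's sketch --- removing the closure by an approximation with a halved radius, and confining the terminal arc of $\gamma$ so that the invertibility and Lipschitz bounds feeding the inverse function theorem are decoupled from where $\gamma$ wanders --- are precisely the right ones.
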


\section{Robust Converse Theorem for Barrier Certificates}
\label{sec:converse}

In this section, we will prove Theorem~\ref{thm:3}. Throughout the section, we will assume a safety verification problem $(f, I, U)$ that fulfills the premises of this theorem: \Changed{robust safety, disjointness of the closures of $I$ and $U$, and boundedness of the set of safe states $\mathbb{R}^n\setminus U$}. We will denote by $\mu$ the robustness margin of $(f, I, U)$, and by $V$ a $\mu$-robust safety certificate whose existence is ensured by Property~\ref{prop:converse_robust}. We will construct a barrier certificate from the boundary $\Frontier{V}$ of the closure of $V$. We will first prove that the solutions of $f$ starting in $\Frontier{V}$ do not have any further intersection with $\Frontier{V}$---even in reverse time:


\begin{lemma}
\label{lem:14}
  For every $x\in R_f^{\mathbb{R}}(\Frontier{V})$ there is precisely one $t\in\mathbb{R}$ such that $\Flow_f(x, t)\in\Frontier{V}$
\end{lemma}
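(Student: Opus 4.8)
The plan is to show that the boundary $\Frontier{V}$ acts as a ``one-way gate'': once a trajectory crosses it (in forward time), it enters the interior of $V$ and never comes back, and symmetrically, in reverse time it was always outside $\Closure{V}$. Concretely, fix $x \in R_f^{\mathbb{R}}(\Frontier{V})$, so there is at least one $t_0$ with $\Flow_f(x, t_0) \in \Frontier{V}$; existence is thus immediate and the whole content is uniqueness. Suppose toward a contradiction that $t_0 < t_1$ are two distinct such times, and set $q := \Flow_f(x, t_0) \in \Frontier{V}$, so $\Flow_f(q, t_1 - t_0) \in \Frontier{V}$ as well. The key observation is that $\Frontier{V} = \Closure{V} \setminus \Interior{V}$ and that, since $V$ is a $\mu$-robust safety certificate, $R_{f,\mu}(V) \subseteq V$; because $\Flow_f$ itself is a $0$-solution (hence an $\varepsilon$-solution for every $\varepsilon \geq 0$), forward trajectories from points of $V$ stay in $V$.

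The main step is to show that a forward trajectory starting at a boundary point $q\in\Frontier{V}$ immediately enters $\Interior{V}$, i.e.\ $\Flow_f(q, s) \in \Interior{V}$ for all small $s > 0$. This is where robustness does the work: since $q \in \Closure{V}$, there are points $p \in V$ arbitrarily close to $q$; by Property~\ref{prop:control} (or by the controllability/inverse-function-theorem argument behind Lemma~\ref{lem:1}), the $\mu$-reachable set $R_{f,\mu}^{s}(\{p\})$ contains a whole neighborhood of $\Flow_f(p, s)$, and for $p$ close enough to $q$ this neighborhood contains $\Flow_f(q,s)$; since $R_{f,\mu}^{s}(\{p\}) \subseteq R_{f,\mu}(V) \subseteq V$ and in fact an open neighborhood of $\Flow_f(q,s)$ lies in $R_{f,\mu}(V)\subseteq V$, we get $\Flow_f(q,s) \in \Interior{V}$. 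Hence after time $t_0$ the trajectory is in $\Interior{V}$ for all later times (forward invariance of $V$ keeps it in $V$, and re-running the neighborhood argument at each later time keeps it in the interior). In particular $\Flow_f(q, t_1 - t_0) \in \Interior{V}$, contradicting that it lies in $\Frontier{V} = \Closure{V}\setminus\Interior{V}$. This rules out two boundary crossings with $t_0 < t_1$.

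It remains to observe there is \emph{some} crossing, which is exactly the hypothesis $x \in R_f^{\mathbb{R}}(\Frontier{V})$, unwound via the identity $\Flow(x,t) = y \iff \Flow(y,-t) = x$: there is $y \in \Frontier{V}$ and $t$ with $\Flow_f(y, t) = x$, hence $\Flow_f(x, -t) = y \in \Frontier{V}$. Combining with the uniqueness just established completes the proof. The main obstacle I anticipate is making the ``immediately enters the interior'' step fully rigorous: one must be careful that the $\mu$-reachable sets $R_{f,\mu}^{s}(\{p\})$ are used at a fixed small time $s$ (so that Property~\ref{prop:control} applies with $\Delta = s$), and that the lower bound $\delta$ on the neighborhood size can be taken uniform over the relevant compact set — which is precisely what Lemma~\ref{lem:1} provides, since $\mathbb{R}^n \setminus U$ is bounded and $V \subseteq \mathbb{R}^n\setminus U$, so all trajectory segments involved stay in a fixed compact set. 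A secondary subtlety is handling the reverse-time direction symmetrically: for $t_0 > t_1$ one runs the same argument after time-reversal, using that the crossing point is still a boundary point and that the forward trajectory from it enters $\Interior V$, which again contradicts membership in $\Frontier V$ at the later of the two times.
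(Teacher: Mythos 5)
Your proof is correct, but it takes a genuinely different route from the paper's. You prove the stronger intermediate claim that the forward trajectory from any boundary point $q\in\Frontier{V}$ immediately enters $\Interior{V}$ and remains there, so a second crossing of $\Frontier{V}$ is impossible. To make the ``immediately enters the interior'' step rigorous you must apply Property~\ref{prop:control} at points $p\in V$ that vary as they approach $q$, and you correctly identify that this forces you to invoke the uniform lower bound on the neighborhood radius from Lemma~\ref{lem:1} (and hence compactness of $\Closure{V}$, available because $\mathbb{R}^n\setminus U$ is bounded and $V\subseteq\mathbb{R}^n\setminus U$). The paper sidesteps this entirely: it fixes an intermediate time $\hat t$ strictly between the two crossing times $t_1<t_2$ and applies Property~\ref{prop:control} twice at the \emph{single fixed point} $\Flow_f(x,\hat t)$ --- once forward, yielding a neighborhood of $\Flow_f(x,t_2)$ every point of which is $\mu$-reachable from $\Flow_f(x,\hat t)$, and once for the reverse dynamics, yielding a neighborhood of $\Flow_f(x,t_1)$ from every point of which $\Flow_f(x,\hat t)$ is $\mu$-reachable. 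Since both crossings lie on $\Frontier{V}$, the second neighborhood contains a point $x_1\in V$ and the first a point $x_2\notin V$; concatenation gives a $\mu$-solution from $V$ to the complement of $V$, contradicting robustness, with no uniformity or boundedness hypotheses needed. Two minor points in your write-up: $\Frontier{V}=\partial\Closure{V}$ equals $\Closure{V}\setminus\Interior{\Closure{V}}$, not $\Closure{V}\setminus\Interior{V}$ in general, though the inclusion $\Interior{V}\subseteq\Interior{\Closure{V}}$ is all your contradiction actually uses; and your closing remark about separately handling $t_0>t_1$ by time reversal is redundant, since you already ordered the two crossing times without loss of generality at the outset.
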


\begin{proof}
By the definition of $R_f^{\mathbb{R}}(\Frontier{V})$ there is at least one such $t$. It suffices to prove that there is not more than one. For this we assume that there are $t_1$ and $t_2$ such that $t_1\neq t_2$, and both $\Flow_f(x, t_1)$ and $\Flow_f(x, t_2)$ are in $\Frontier{V}$. W.l.o.g. assume that $t_1<t_2$. Let $\hat{t}$ be such that $t_1<\hat{t}<t_2$. 

Then $\Flow(x, t_2)= \Flow(\Flow(x, \hat{t}), t_2-\hat{t})$ and so $\Flow(x, t_2)\in R^{t_2-\hat{t}}_{f, 0}(\{\Flow(x, \hat{t})\})$. Therefore, by Property~\ref{prop:control}, there is a neighborhood of $\Flow(x, t_2)$ such that every point in this neighborhood is $\mu$-reachable from $\Flow(x, \hat{t})$. The same argument applied in reverse time to $t_1$ and $\hat{t}$ ensures a neighborhood of $\Flow(x, t_1)$ such that $\Flow(x, \hat{t})$ is $\mu$-reachable from every point in this neighborhood. Let $x_1$ be in the neighborhood of $\Flow(x, t_1)$ such that $x_1\in V$, and let $x_2$ be in the neighborhood of $\Flow(x, t_2)$ such that $x_2\not\in V$. Then there is a $\mu$-solution from $x_1\in V$ over $\Flow_f(x, \hat{t})$ to $x_2\not\in V$. This is in contradiction to the fact that $V$ is a $\mu$-robust safety certificate. 
\end{proof}

\Changed{We will now introduce a function $\nu(x)$ from which we will later obtain the barrier certificate by smooth approximation.}
Let $\nu(x)$ be such that $\nu(x)$ is defined on $R_f^{\mathbb{R}}(\Frontier{V})$ and such that $\nu(x):= -t$ with $t$ being the unique $t$ with $\Flow_f(x, t)\in\Frontier{V}$ ensured by Lemma~\ref{lem:14}. Note that $\nu(x)$ is positive for all points of the  interior of $V$ for which it is defined and negative for all points of the complement of the closure of $V$ for which it is defined. Moreover, $\Frontier{V}$ is in the interior of the domain of definition of $\nu$.
%
%
%
Due to the definition of $\nu$ we also have:
\Changed{
\begin{property}
\label{prop:3}
    For $x\in R_f^{\mathbb{R}}(\Frontier{V})$, $\nu(\varphi(x, t))=\nu(x)+t$.
\end{property}}

In general, functions defined based on the length of solutions leading to some set are not continuous. But using the robustness of $V$ we will be able to prove continuity:

\begin{lemma}
\label{lem:16}
The function $\nu$ is continuous in an open neighborhood of $\Frontier{V}$.
\end{lemma}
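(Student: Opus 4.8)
The plan is to fix a point $p\in\Frontier{V}$ and show that $\nu$ is continuous at $p$; since continuity is a local property and $\Frontier{V}$ lies in the interior of the domain of $\nu$, this gives continuity on an open neighborhood of $\Frontier{V}$. Without loss of generality, by Property~\ref{prop:3} (time-shift: $\nu(\Flow(x,t)) = \nu(x)+t$) it suffices to prove continuity at points of $\Frontier{V}$ itself, where $\nu = 0$: indeed, if $q$ is any point in the domain of $\nu$, then $q = \Flow(q',\nu(q))$ for $q' = \Flow(q,-\nu(q))\in\Frontier{V}$, and continuity of $\nu$ at $q'$ together with continuous dependence of the flow on initial conditions transfers to continuity of $\nu$ at $q$. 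So let $p\in\Frontier{V}$; I want to show that for every $\eta>0$ there is a neighborhood $W$ of $p$ such that $|\nu(x)|<\eta$ for all $x\in W$ (once $\nu$ is defined on such $W$, which is part of what must be shown).

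The core estimate uses robustness of $V$ in the style of the proof of Lemma~\ref{lem:14}. Fix $\eta>0$. Consider the forward-flow image $\Flow(p,\eta/2)$; it lies in the interior of $V$ (since $p\in\Frontier{V}$ and $\nu$ is positive on the interior of $V$ — here one needs that $\Flow(p,\eta/2)$ is still in the domain of $\nu$, which holds for $\eta$ small because $\Frontier{V}$ is in the interior of the domain). Likewise $\Flow(p,-\eta/2)$ lies in the complement of the closure of $V$. Now apply Property~\ref{prop:control} (with $\varepsilon = 0 < \hat\varepsilon = \mu$): there is a $\delta_+>0$ such that every point in the $\delta_+$-neighborhood of $\Flow(p,\eta/2)$ is $\mu$-reachable in time $\eta/2$ from $p$; and, running time backwards, a $\delta_->0$ such that $\Flow(p,-\eta/2)$ is $\mu$-reachable in time $\eta/2$ from every point in its $\delta_-$-neighborhood. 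Pick $r_+\in V$ in the $\delta_+$-neighborhood of $\Flow(p,\eta/2)$ and $r_-\notin V$ in the $\delta_-$-neighborhood of $\Flow(p,-\eta/2)$. By continuity of the flow, choose a neighborhood $W$ of $p$ small enough that for every $x\in W$, $\Flow(x,\eta/2)$ is within $\delta_+$ of $\Flow(p,\eta/2)$ and $\Flow(x,-\eta/2)$ is within $\delta_-$ of $\Flow(p,-\eta/2)$ (shrinking $W$ also to stay inside the domain of $\nu$). Then for any $x\in W$: if $\nu(x)\geq \eta/2$, the trajectory from $x$ would not have reached $\Frontier{V}$ by forward time $\eta/2$, so $\Flow(x,\eta/2)$ would still be in the closure of $V$; but then gluing a $\mu$-solution from $r_-\notin V$ through $\Flow(p,-\eta/2)$, $p$, $x$, $\Flow(x,\eta/2)$ — wait, the cleanest contradiction is the symmetric one used below. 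Concretely: if $\nu(x)\geq\eta/2$ then $x$ and $\Flow(x,\eta/2)$ are both in $\Closure{V}$, actually in $V$ up to the robustness argument — let me instead argue $\nu(x)<\eta/2$ directly. Suppose $\nu(x)\geq\eta/2$; then $\Flow(x,-\eta/2)$ has $\nu$-value $\nu(x)-\eta/2\geq 0$, hence lies in $\Closure V$, so there is a point $s\in V$ arbitrarily near it, in particular $\Flow(p,-\eta/2)$ is $\mu$-reachable from $\Flow(x,-\eta/2)$'s neighborhood, hence from a point of $V$. Then concatenate: from that point of $V$, a $\mu$-solution reaching $\Flow(p,-\eta/2)$, continuing as the exact solution to $p$, continuing to $\Flow(p,\eta/2)$, then a $\mu$-solution to $r_+\in V$ — but $r_+\in V$, so no contradiction from that end. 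The contradiction comes from the $\notin V$ side: symmetrically, $\nu(x)\leq -\eta/2$ leads via $r_-\notin V$ being reachable along a $\mu$-solution that starts in $V$, contradicting that $V$ is a $\mu$-robust safety certificate. So one gets $-\eta/2 < \nu(x) < \eta/2$, i.e. $|\nu(x)|<\eta$ on $W$.

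The main obstacle is the bookkeeping of which side gives the contradiction and making sure $\nu$ is actually \emph{defined} on the neighborhood $W$ (not just bounded): one must invoke that $\Frontier V$ is in the interior of the domain of $\nu$, shrink $\eta$ so that $\Flow(p,\pm\eta/2)$ lie in that domain, and then note that for $x$ near $p$ the trajectory segment $[-\eta/2,\eta/2]$ of $x$ stays in a compact neighborhood of the corresponding segment of $p$ where the domain condition persists; the unique crossing time guaranteed by Lemma~\ref{lem:14} then lies in $(-\eta/2,\eta/2)$, so $\nu(x)$ is defined and the bound above applies. Assembling these pieces — reduce to $p\in\Frontier V$ via Property~\ref{prop:3}, set up the two robustness neighborhoods as in Lemma~\ref{lem:14}, transport along the flow, and extract both inequalities — completes the proof.
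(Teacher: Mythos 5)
Your overall strategy---fix $p\in\Frontier{V}$ and $\eta>0$ first, apply Property~\ref{prop:control} at the \emph{fixed} points $\Flow(p,\pm\eta/2)$, deduce $|\nu|<\eta$ on a neighborhood of $p$, and then transport continuity from $\Frontier{V}$ to the rest of the domain via Property~\ref{prop:3} and continuity of the flow---is a legitimate restructuring of the paper's argument. (The paper instead argues by contradiction from an unknown discontinuity point, which is why it must invoke the uniform controllability bound of Lemma~\ref{lem:1} over a compact set; your pointwise use of Property~\ref{prop:control} at points determined before the neighborhood is chosen genuinely avoids that, at the cost of needing the reduction to boundary points.) However, the central step is not correctly executed, and the gap is not merely expository.

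Concretely: (1) you never close the case $\nu(x)\geq\eta/2$. You say so yourself (``no contradiction from that end'') and then nevertheless conclude $-\eta/2<\nu(x)<\eta/2$. The contradiction in that case is available, but it is not the one your auxiliary points $r_\pm$ are set up for: one must use the backward-controllability radius $\delta_-$ to steer, with a perturbed solution of duration $\eta/2$, from $\Flow(x,-\eta/2)\in\Closure{V}$ (which lies within $\delta_-$ of $\Flow(p,-\eta/2)$) forward to $p\in\Frontier{V}$; symmetrically, in the case $\nu(x)\leq-\eta/2$ one steers from $p$ forward to $\Flow(x,\eta/2)\notin\Interior{V}$ using $\delta_+$. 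Your $r_+\in V$ and $r_-\notin V$ end up at the wrong ends of the trajectories you can actually build (a perturbed solution \emph{ending} at $r_+\in V$, or \emph{starting} at $r_-\notin V$, contradicts nothing). (2) Even once the right trajectory is built, it only connects a point of $\Closure{V}$ to a point not in $\Interior{V}$ (its endpoints are $p\in\Frontier{V}$ and a point with $\nu$ of the appropriate sign, neither of which is known to lie in $V$ or in its complement). To contradict the definition of a $\mu$-robust safety certificate one still needs the perturbation argument from the proof of Lemma~\ref{lem:14} to replace the endpoints by a point of $V$ and a point outside $V$; this step is absent from your proposal, and it requires having spent only $\mu/2$ (not the full $\mu$, as you do with $\hat\varepsilon=\mu$) in the controllability step, exactly as in the paper's budgeting. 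With the two cases completed symmetrically and the final perturbation added, your approach goes through.
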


Before formally proving this lemma, we explain the idea of the proof. We assume that $\nu$ is not continuous let us say at a point $x$, and derive a contradiction. Let $y$ be the point where the solution from $x$ enters $V$. Then, due to non-continuity, the situation shown in Figure~\ref{fig:intuition} arises: There is a point $x'$ close to $x$ such that the solution from $x'$ enters a close neighborhood of $y$ (point $\hat{y}$), but then enters $V$ at a different point further away. Letting $y'$ be close to that entrance point but still outside of $V$ we can perturb the trajectory from $\hat{y}$ to $y'$ by moving $\hat{y}$ to $y$. The resulting trajectory leads from $y$ to $y'$ and hence leaves $V$ which is a contradiction to $V$ being a robust safety certificate.

\begin{figure}[ht]
  \centering
  \Preprint{\includegraphics[width=9cm,clip,trim={8cm 12cm 8cm 1cm}]{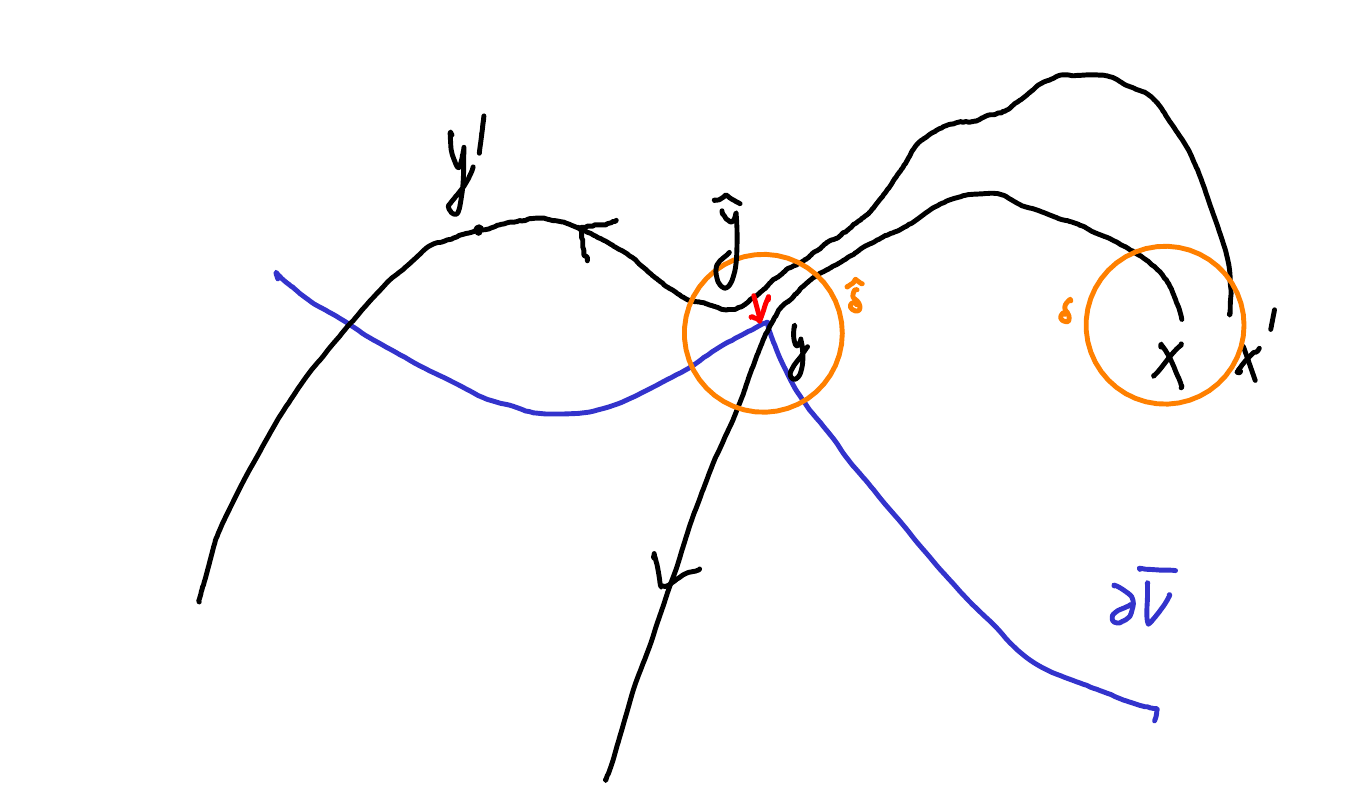}}{\includegraphics[width=9cm]{continuity_nu_new.pdf}}

  \caption{Intuition of the Proof of Lemma~\ref{lem:16}}
  \label{fig:intuition}
\end{figure}

The figure illustrates the situation where $x\not\in V$, but we do not assume this in the proof by exploiting the fact that the flow $\Flow$ also allows negative time.

\begin{proof}
The function $\nu$ being continuous in an open neighborhood of $\Frontier{V}$ means
that for all elements $x$ of this neighborhood, for all $\varepsilon>0$ there exists $\delta>0$ s.t. for all $x'$ with $||x'-x||<\delta$, $|\nu(x')-\nu(x)|<\varepsilon$. We assume non-continuity of $\nu$ in every open neighborhood of $\Frontier{V}$, and derive a contradiction.

Since the set of states $\mathbb{R}^n\setminus U$ is bounded and the safety verification problem $(f, I, U)$ is robustly safe, the safety certificate $V$ is bounded, too. Let $N$ be an open, but bounded, neighborhood of $\Frontier{V}$.  Due to our assumption, 
 $\nu$ is non-continuous in $N$, that is, there is an $x\in N$, $\varepsilon>0$ s.t. $\forall \delta>0 \exists x'$ s.t. $||x-x'||\leq \delta$, $|\nu(x)-\nu(x')|\geq\varepsilon$. Choose such an $x$ and $\varepsilon$.

\Changed{Since $N$ is open, but bounded, $\Closure{N}$ is compact}. Now let $\hat{\delta}^+$ be the constant ensured by Lemma~\ref{lem:1} after choosing $\Omega$ as $\Closure{N}$, $\varepsilon$ as $\frac{\mu}{2}$, and  $\Delta$ as the constant $\varepsilon$ chosen above. Let $\hat{\delta}^-$ be the corresponding constant for reverse dynamics $-f$. Choose $\hat{\delta}=\min\{ \hat{\delta}^+, \hat{\delta}^-\}$. Then for all $p, p'\in N$ such that
$p'\in \Closure{R^{\varepsilon}_{f,\frac{\mu}{4}}(\{p\})}$, every point in the $\hat{\delta}$-neighborhood of $p'$ is in $R^{\varepsilon}_{f,\frac{\mu}{2}}(\{p\})$. Also for all $p, p'\in N$ such that
 $p'\in \Closure{R^{\varepsilon}_{-f,\frac{\mu}{4}}(\{p\})}$, every point in the $\hat{\delta}$-neighborhood of $p'$ is in $R^{\varepsilon}_{-f,\frac{\mu}{2}}(\{p\})$. We note these properties and refer to them below by $(*)^+$, and $(*)^-$, respectively.

Since the flow $\Flow$ is continuous we can choose a $\delta$ s.t. for all $x'$ with $||x-x'||<\delta$,  $||\Flow(x, -\nu(x))- \Flow(x', -\nu(x))||<\hat{\delta}$. Due to our assumption of non-continuity of $\nu$, for this $\delta$ there is an $x'$ s.t. $||x-x'||\leq \delta$, $|\nu(x)-\nu(x')|\geq\varepsilon$. 

Let $y:=\Flow_f(x,-\nu(x))$, $\hat{y}:= \Flow_f(x',-\nu(x))$, $y':= \Flow_f(\hat{y}, \Sign{\nu(x)-\nu(x')}\varepsilon)$. Then $||y-\hat{y}||<\hat{\delta}$, and $y'=\Flow_f(\Flow_f(x',-\nu(x)), \Sign{\nu(x)-\nu(x')}\varepsilon)=\Flow_f(x',-\nu(x)+\Sign{\nu(x)-\nu(x')}\varepsilon)$.
Hence, by Property~\ref{prop:3}, $\nu(y')= \nu(\Flow_f(x',-\nu(x)+\Sign{\nu(x)-\nu(x')}\varepsilon))=\nu(x')-\nu(x)+\Sign{\nu(x)-\nu(x')}\varepsilon$.

We have two cases:
\begin{itemize}
\item If $\nu(x)-\nu(x')\geq 0$, then $\nu(x)-\nu(x')=|\nu(x)-\nu(x')|\geq \varepsilon$, and 
  \begin{align*}
\nu(y')&=&\\
\nu(x')-\nu(x)+\Sign{\nu(x)-\nu(x')}\varepsilon&=&\\
-(\nu(x)-\nu(x'))+\varepsilon &\leq& 0.
  \end{align*}
  Since $y'= \Flow_f(\hat{y}, \Sign{\nu(x)-\nu(x')}\varepsilon)=\Flow_f(\hat{y}, \varepsilon)$, $\hat{y}=\Flow_{-f}(y',\varepsilon)$, and hence $\hat{y}\in \Closure{R^{\varepsilon}_{-f,\frac{\mu}{4}}(\{y'\})}$. So by Property~$(*)^-$ noted above, $y\in R^{\varepsilon}_{-f,\frac{\mu}{2}}(\{y'\})$, and so $y'\in R^{\varepsilon}_{f,\frac{\mu}{2}}(\{y\})$. But $\nu(y)=0$ and $\nu(y')\leq 0$. Hence $y\in\Closure{V}$ and
  $y'\not\in\Interior{V}$.

\item If $\nu(x)-\nu(x')\leq 0$, then $-(\nu(x)-\nu(x'))=|\nu(x)-\nu(x')|\geq \varepsilon$, and 
  \begin{align*}
\nu(y')&=&\\
\nu(x')-\nu(x)+\Sign{\nu(x)-\nu(x')}\varepsilon&=&\\
-(\nu(x)-\nu(x'))-\varepsilon &\geq& 0.
  \end{align*}
  

  Since $y'= \Flow_f(\hat{y}, \Sign{\nu(x)-\nu(x')}\varepsilon)=\Flow_f(\hat{y}, -\varepsilon)$, $\hat{y}= \Flow_{f}(y', \varepsilon)$, and hence $\hat{y}\in \Closure{R^{\varepsilon}_{f,\frac{\mu}{4}}(\{y'\})}$. So, by Property~$(*)^+$ noted above, $y\in R^{\varepsilon}_{f,\frac{\mu}{2}}(\{y'\})$.  But $\nu(y')\geq 0$ and $\nu(y)=0$. Hence   $y'\in\Closure{V}$ and $y\not\in\Interior{V}$.

\end{itemize}
In both cases, we have a $\frac{\mu}{2}$-solution from a point in $\Closure{V}$ to a point not in $\Interior{V}$. Hence, using the same construction as in the second part of the proof of Lemma~\ref{lem:14} above, one gets a $\mu$-solution from a point in $V$ to a point not in $V$ which is a contradiction to the fact that $V$ is a $\mu$-robust safety certificate.







\end{proof}

Using the notation 
\[(\Lie{f}{\nu})(x) := \lim_{h\rightarrow 0} \frac{\nu(\Flow_f(x, h))-\nu(x)}{h}\]
we now remind the reader of a classical result by W.~Wilson~\cite[Theorem~2.5]{Wilson:69}, and for the reader's convenience, we state here a direct corollary: 

\begin{lemma}
\label{lem:15}
  If $h$ is a continuous real-valued function on a compact set $N$, $f$ is a non-singular vector field on $N$, and $\Lie{f}{h}$ is defined on $N$ and continuous, then for every $\varepsilon>0$ there exists a $C^{\infty}$-function $h': N\rightarrow \mathbb{R}$ such that for all $x\in N$, 
\[ |h(x)-h'(x)|<\varepsilon\]
and
\[ (\Lie{f}{h'})(x) > (\Lie{f}{h})(x)-\varepsilon.\]
\end{lemma}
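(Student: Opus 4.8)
The plan is to obtain Lemma~\ref{lem:15} as a direct consequence of Wilson's Theorem~2.5 in~\cite{Wilson:69}: essentially all that has to be done is to cast the hypotheses into the form Wilson requires and then extract the claimed (one-sided) derivative estimate from his conclusion. Wilson's theorem states, in essence, that if $g$ is a continuous real-valued function on a smooth manifold, $X$ a non-singular smooth vector field, and the derivative $\Lie{X}{g}$ exists and is continuous, then $g$ can be approximated uniformly by a $C^\infty$ function $g'$ whose derivative $\Lie{X}{g'}$ is uniformly close to $\Lie{X}{g}$, with prescribed tolerance. Applying this with $X=f$, $g=h$ and the constant tolerance $\varepsilon$ produces a $C^\infty$ function $h'$ with $|h-h'|<\varepsilon$ and $|\Lie{f}{h'}-\Lie{f}{h}|<\varepsilon$ on $N$; the second inequality in particular gives $(\Lie{f}{h'})(x)>(\Lie{f}{h})(x)-\varepsilon$, which is the claim.

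The only point that requires care is that $N$ is merely a compact subset of $\mathbb{R}^n$, whereas Wilson's setting is an open manifold. Since $f$ is continuous and non-singular on the compact set $N$, there is a bounded open $N'$ with $N\subseteq N'$ on which $f$ is still non-singular. In the intended application $h$ is in fact already given and continuous on such an $N'$ --- it will be the function $\nu$, which by Lemma~\ref{lem:16} is continuous on an open neighbourhood of $\Frontier{V}$ and there satisfies $\Lie{f}{\nu}=1$ by Property~\ref{prop:3}, so that $\Lie{f}{h}$ is automatically continuous --- and one simply applies Wilson on $N'$ and restricts $h'$ to $N$. If one wants to start from data prescribed only on $N$, one first extends $h$ continuously to $N'$ by Tietze's extension theorem, noting that the smoothing step below only reads the extension inside an arbitrarily thin collar of $N$.

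For completeness I would also recall the mechanism behind Wilson's derivative estimate. Cover $N$ by finitely many flow-box charts in which $f=\partial/\partial x_1$, so that in those coordinates $\Lie{f}{\cdot}=\partial(\cdot)/\partial x_1$, and fix a smooth partition of unity $\rho_1,\dots,\rho_k$ subordinate to the cover. In the $i$-th chart replace $h$ by its mollification $h_i$ in the $x_1$-direction only; since $\partial h/\partial x_1=\Lie{f}{h}$ is continuous, it is uniformly approximated on the relevant compacta by its own $x_1$-mollification, which is exactly $\partial h_i/\partial x_1$, so both $h_i\to h$ and $\partial h_i/\partial x_1\to\partial h/\partial x_1$ uniformly as the mollifier width shrinks. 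Put $h':=\sum_i\rho_i h_i$; then $h'$ is $C^\infty$, $h'-h=\sum_i\rho_i(h_i-h)$ is uniformly small, and
\[\Lie{f}{h'}-\Lie{f}{h}=\sum_i\rho_i\Big(\frac{\partial h_i}{\partial x_1}-\frac{\partial h}{\partial x_1}\Big)+\sum_i\frac{\partial\rho_i}{\partial x_1}\,(h_i-h),\]
where the second sum exploits $\sum_i\partial\rho_i/\partial x_1\equiv 0$. Both sums are made smaller than $\varepsilon$ on $N$ by taking the mollifier widths small enough, which finishes the argument.

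I do not expect a genuine obstacle here, since the statement is essentially a repackaging of a known theorem; the two things to be careful about are (i) passing to a slightly larger open set so that mollification along flow lines stays inside the domain of $h$, and (ii) the cross-term $\sum_i(\partial\rho_i/\partial x_1)(h_i-h)$ in the partition-of-unity step, which is controllable only because $\sum_i\rho_i\equiv 1$ forces $\sum_i\partial\rho_i/\partial x_1\equiv 0$.
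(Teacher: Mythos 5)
Your proposal takes exactly the paper's approach: the paper gives no proof of Lemma~\ref{lem:15} at all, stating it as a direct corollary of Wilson's Theorem~2.5, which is precisely your main argument (together with the routine restriction from an open neighbourhood of the compact set $N$). One minor slip in your supplementary sketch of Wilson's mechanism: mollifying only in the $x_1$-direction yields a function that is smooth in $x_1$ but merely continuous in the transverse variables, so one must mollify in all variables and use that $\partial_{x_1}(h\ast\phi)=(\partial_{x_1}h)\ast\phi$ when $\partial_{x_1}h=\Lie{f}{h}$ is continuous; this does not affect the citation-based proof itself.
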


Based on this, we can now construct a robust barrier certificate from the robust safety certificate $V$:

\begin{lemma}
  There is a  differentiable function  $\beta:\mathbb{R}^n\rightarrow\mathbb{R}$ and $\varepsilon>0$ such that
\begin{itemize}
\item $\forall x \;.\; x\in I\Rightarrow \beta(x)>\varepsilon$
\item $\forall x \;.\; |\beta(x)|\leq\varepsilon\Rightarrow L_f(\beta)(x)>\varepsilon$, that is $(\nabla \beta(x))^T f(x) >\varepsilon$
\item $\forall x \;.\; x\in U\Rightarrow \beta(x)<\varepsilon$
\end{itemize}

\end{lemma}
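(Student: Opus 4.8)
The plan is to build $\beta$ by extending the function $\nu$ from a neighborhood of $\Frontier{V}$ to all of $\mathbb{R}^n$, applying Wilson's smoothing result (Lemma~\ref{lem:15}) on a compact set, and then patching the smooth local approximation together with suitable constant values far from $\Frontier{V}$ using a smooth partition of unity. First I would fix a bounded open neighborhood $N$ of $\Frontier{V}$ on which $\nu$ is continuous (Lemma~\ref{lem:16}); shrinking $N$ if necessary, I can arrange that its closure $\Closure{N}$ is compact, disjoint from $\Closure{I}$ and from $\Closure{U}$ (this uses that $\Frontier{V}\subseteq\Interior{V}^c\cap\Closure{V}$ separates $I$ from $U$, since $I\subseteq V$ and $V\cap U=\emptyset$, and that $f$ is non-singular on the relevant compact set — or, if $f$ can vanish, one first observes that an equilibrium cannot lie on $\Frontier{V}$ because $\nu$ satisfies Property~\ref{prop:3}, i.e. $\nu$ strictly increases along trajectories through $\Frontier{V}$). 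On $\Closure{N}$, the Lie derivative $\Lie{f}{\nu}$ equals $1$ everywhere by Property~\ref{prop:3}, hence is continuous, so Lemma~\ref{lem:15} applies: for any $\varepsilon_0>0$ there is a $C^\infty$ function $h'$ on $\Closure{N}$ with $|h'-\nu|<\varepsilon_0$ and $\Lie{f}{h'}>1-\varepsilon_0>\tfrac12$.

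Next I would globalize. Pick a slightly smaller neighborhood $N'$ with $\Closure{N'}\subseteq N$ still containing $\Frontier{V}$, and a smooth bump function $\rho:\mathbb{R}^n\to[0,1]$ that is $1$ on $\Closure{N'}$ and $0$ outside $N$. On the ``inside'' region $\Interior{V}\setminus N$ I want $\beta$ to be a large positive constant $c^+$, and on the ``outside'' region $(\mathbb{R}^n\setminus\Closure{V})\setminus N$ a large negative constant $c^-$; the idea is $\beta := \rho\cdot h' + (1-\rho)\cdot g$, where $g$ is locally constant $= c^+$ on the component(s) of $\mathbb{R}^n\setminus N$ lying inside $V$ and $= c^-$ on those lying outside. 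Since $N$ separates $\Interior{V}$ from its complement, $g$ is well-defined and smooth. One then checks: for $x\in I$ (which lies in $\Interior{V}\setminus N$, after possibly shrinking $N$ away from $\Closure I$), $\beta(x)=c^+$ large; for $x\in U$, $\beta(x)=c^-$ large negative; and for the set $\{|\beta|\le\varepsilon\}$, with $\varepsilon$ chosen small (smaller than $c^+$, $|c^-|$, and than the oscillation bounds), this set sits well inside $N'$ where $\rho\equiv1$ so $\beta=h'$ there, giving $\Lie{f}{\beta}=\Lie{f}{h'}>\tfrac12>\varepsilon$. The sign conditions on $I$ and $U$ follow because $\nu>0$ on $\Interior{V}$, $\nu<0$ outside $\Closure V$, and $h'$ is $\varepsilon_0$-close to $\nu$.

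The main obstacle I anticipate is the bookkeeping needed to make the zero-set of $\beta$ fall strictly inside the region $N'$ where $\beta$ coincides with the good smooth approximant $h'$ — i.e. controlling where $\beta$ transitions through $0$ while $\rho$ is dropping from $1$ to $0$. This requires that $\nu$ (hence $h'$) already be bounded away from $0$ on $N\setminus N'$: since $\Frontier{V}=\{\nu=0\}\cap(\text{domain})$ is closed and $\Closure N\setminus N'$ is compact and disjoint from $\Frontier{V}$, there is $\eta>0$ with $|\nu|\ge\eta$ on $\Closure N\setminus N'$; choosing $\varepsilon_0<\eta/2$ and then $\varepsilon<\eta/2$ forces $\{|\beta|\le\varepsilon\}\subseteq N'$, where the flow-derivative estimate is clean. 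A secondary technical point is ensuring $\beta$ is genuinely differentiable (indeed $C^\infty$) globally, which is automatic from the partition-of-unity construction, and ensuring $N$ can be chosen disjoint from $\Closure I\cup\Closure U$, which uses the theorem's hypothesis that $\Closure I$ and $\Closure U$ are disjoint together with the fact that $\Frontier{V}$ lies strictly between them (no trajectory from $I$ reaches $\Frontier V$ in the ``wrong'' direction, by robust safety). Once these separations and the constant $\eta$ are fixed, verifying the three bulleted properties of $\beta$ is a routine check.
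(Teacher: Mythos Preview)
Your proposal is correct and follows essentially the same approach as the paper: compute $\Lie{f}{\nu}=1$ from Property~\ref{prop:3}, apply Wilson's smoothing (Lemma~\ref{lem:15}) on a compact neighborhood of $\Frontier{V}$, then glue the smooth approximation to constants away from $\Frontier{V}$ via a partition of unity. The only cosmetic difference is that the paper takes the nested neighborhoods to be the $\nu$-sublevel sets $N_\delta=\{|\nu|\le\delta\}$ and $N_{\delta/2}$ and uses $\pm\delta$ as the patching constants, whereas you use generic nested neighborhoods $N'\subset N$ with a single bump function and large constants $c^\pm$; the transition-region obstacle you anticipate is handled the same way in both.
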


\begin{proof}
From the $\varepsilon$-robust safety certificate $V$ we can construct a function $\nu$ as above. The zero set $\{ x \mid \nu(x)=0\}$ is equal to $\Frontier{V}$, and hence 
it intersects neither $I$ nor $U$. Moreover, $f$ is non-singular on this set. For all $\delta\in\mathbb{R}^{\geq 0}$, let $N_\delta:=\{ x \mid |\nu(x)| \leq \delta\}$. Now let $0<\delta<\frac{1}{2}$ be such that the $\delta$-sublevel set $N_\delta$ still intersects neither $I$ nor $U$, $f$ is non-singular on $N_\delta$, and $N_\delta$ is contained in the open neighborhood of $\Frontier{V}$ for which Lemma~\ref{lem:16} ensures continuity of $\nu$.

Now observe that for all $x\in N_\delta$,
\begin{align*}
 (\Lie{f}{\nu})(x)&=&\\
\lim_{h\rightarrow 0} \frac{\nu(\Flow_f(x, h))-\nu(x)}{h}&=&\\
\lim_{h\rightarrow 0} \frac{\nu(x)+h-\nu(x) }{h}&=&1.
\end{align*}

Further, due to continuity of $\nu$ we can use Lemma~\ref{lem:15} to construct a 
$C^{\infty}$-function $\nu'$ such that for all $x\in N_\delta$, \[ |\nu(x)-\nu'(x)|<\frac{\delta}{2}\]
and
\[ (\Lie{f}{\nu'})(x) > (\Lie{f}{\nu})(x)-\frac{\delta}{2},\]

Then for all $x\in N_\delta$, $(\Lie{f}{\nu'})(x)>1-\frac{\delta}{2}>0$, and for all $x\not\in N_{\frac{\delta}{2}}$, $\nu'(x)\neq 0$.

Now let $\mathcal{X}=(X_\alpha)_{\alpha\in A}$ be an open cover of $\mathbb{R}^n$ indexed by a set $A$. Assume furthermore, that for all $\alpha\in A$, $X_\alpha$ contains a common element with at most one of the sets $N_{\frac{\delta}{2}}$, $\{ x \mid \nu(x)\geq \delta\}$, and $\{ x \mid \nu(x)\leq -\delta\}$. Let $(\beta_\alpha)_{\alpha\in A}$ be a family of functions s.t. for each $\alpha\in A$, and $x\in\mathbb{R}^n$, 
\[\beta_{\alpha}(x):=\left\{ \begin{array}{ll}
                                         {\delta}, & \mbox{if } X_{\alpha}\cap \{ x \mid \nu(x)\geq \delta\}\neq\emptyset,\\[0.2ex]
                               {-\delta}, & \mbox{if } X_{\alpha}\cap \{ x \mid \nu(x)\leq -\delta\}\neq\emptyset,\\[0.2ex]
                                         {\nu'(x)}, & \mbox{otherwise}.\\[0.2ex]                               
                                          \end{array} \right.   \]

                                      Let $(\psi_{\alpha})_{\alpha\in A}$ be a partition of unity~\cite[Chapter 2]{Lee:12} sub-ordinate to $\mathcal{X}$ and \[\beta(x):= \sum_{\alpha\in A} \psi_{\alpha}(x) \beta_{\alpha}(x). \] 

Since $(\psi_{\alpha})_{\alpha\in A}$ is a partition of unity, $\beta$ is smooth. Since for all $\alpha\in A$, for all $x\not\in N_{\frac{\delta}{2}}$, $\beta_\alpha(x)\neq 0$, $\beta$ is                                  
non-zero outside of $N_{\frac{\delta}{2}}$. Since $\beta$ coincides with $\nu'$ on $N_{\frac{\delta}{2}}$, for all $x\in N_{\frac{\delta}{2}}$, $(\Lie{f}{\beta})(x)>0$. Moreover, $\beta$ is strictly positive on $I$, and strictly negative on $U$. The existence of a bound $\varepsilon>0$ follows due to continuity of $\beta$ and compactness of all relevant sets.


\end{proof}

Theorem~\ref{thm:3} follows as a corollary.

\section{Finite Time Converse Theorem for Safety Certificates}
\label{sec:finite-time}

The construction in the previous section was based on a 
robust safety certificate $V$ whose existence is ensured by Property~\ref{prop:converse_robust}. However, this property has the disadvantage of being based on an infinite time reach set. We will remove this disadvantage by proving Theorem~\ref{thm:1} in the next section.

The proof extends a technique introduced by M.~Fr{\"a}nzle~\cite{Fraenzle:99}, that he originally used for proving existence of a barrier for hybrid systems with polynomial flow: Take the bloated finite-time reach set $R_{f,\varepsilon}^{[0, t]}(I)$, and show that the original dynamics is shrinking on it if the bloated  reach set does not grow beyond the bounded complement of $U$. However, here we have additional complications: First, our flow is, in general, not polynomial (note that even linear ODEs usually have a non-polynomial flow), and second, we are not satisfied with a barrier, but we want a \emph{robust} barrier.


Before the actual proof, we note:

\begin{lemma}
  \label{lem:2}
  Let $V\subseteq\mathbb{R}^n$, $\Delta>0$ such that \[R_{f,\varepsilon}^{\Delta}(V)\subseteq V.\] Then for all $\Delta'\geq\Delta$, \[R_{f,\varepsilon}(R_{f,\varepsilon}^{[0, \Delta']}(V))\subseteq R_{f,\varepsilon}^{[0, \Delta']}(V). \]
\end{lemma}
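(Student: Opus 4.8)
The plan is to peel off the definitions and reduce the asserted invariance of the finite-time reach tube $W:=R_{f,\varepsilon}^{[0,\Delta']}(V)$ to the single hypothesis $R_{f,\varepsilon}^{\Delta}(V)\subseteq V$, via a \emph{concatenate-then-truncate} argument: any trajectory emanating from $W$ can be re-expressed as a trajectory emanating from $V$, and then ``shortened'' by stepping forward in blocks of length $\Delta$, each of which lands back in $V$, until the residual time is below $\Delta\leq\Delta'$.

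In detail, I would take an arbitrary $q\in R_{f,\varepsilon}(W)$. By definition there is an $\varepsilon$-solution $y$ with $y(0)\in W$ and $q=y(r)$ for some $r\geq 0$, and since $y(0)\in W$ there is an $\varepsilon$-solution $x$ with $x(0)\in V$ and $y(0)=x(s)$ for some $s\in[0,\Delta']$. The concatenation $z$ given by $z(\tau)=x(\tau)$ for $\tau\in[0,s]$ and $z(\tau)=y(\tau-s)$ for $\tau\geq s$ is again an $\varepsilon$-solution, with $z(0)=x(0)\in V$ and $z(s+r)=q$. Writing $T:=s+r\geq 0$, this shows $q\in R_{f,\varepsilon}^{T}(V)$; I also record the routine fact that for every $a\geq 0$ the time shift $\tau\mapsto z(\tau+a)$ is an $\varepsilon$-solution starting at $z(a)$.

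Next I would prove, by induction on the integer $j\geq 0$, that $z(j\Delta)\in V$: the base case is $z(0)\in V$, and the inductive step applies $R_{f,\varepsilon}^{\Delta}(V)\subseteq V$ to the shifted $\varepsilon$-solution $\tau\mapsto z(\tau+j\Delta)$, which starts at $z(j\Delta)\in V$ and hence has value $z((j+1)\Delta)\in V$ at time $\Delta$. Now set $k:=\lfloor T/\Delta\rfloor$, so that $0\leq T-k\Delta<\Delta\leq\Delta'$. The shifted $\varepsilon$-solution $\tau\mapsto z(\tau+k\Delta)$ starts at $z(k\Delta)\in V$ and has value $z(T)=q$ at time $T-k\Delta\in[0,\Delta']$; therefore $q\in R_{f,\varepsilon}^{[0,\Delta']}(V)=W$. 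As $q\in R_{f,\varepsilon}(W)$ was arbitrary, this yields $R_{f,\varepsilon}\big(R_{f,\varepsilon}^{[0,\Delta']}(V)\big)\subseteq R_{f,\varepsilon}^{[0,\Delta']}(V)$. (Equivalently, one first shows $R_{f,\varepsilon}(V)\subseteq R_{f,\varepsilon}^{[0,\Delta)}(V)$ by the same truncation, and then combines it with the identity $R_{f,\varepsilon}\big(R_{f,\varepsilon}^{[0,\Delta']}(V)\big)=R_{f,\varepsilon}(V)$ and $\Delta\leq\Delta'$.)

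The only delicate point — the ``main obstacle'', such as it is — is the closure of the class of $\varepsilon$-solutions under concatenation at a point and under time shift; this is where the precise regularity convention for $\varepsilon$-solutions matters (differentiability everywhere versus absolute continuity with the defining inequality holding almost everywhere, so that a single junction point is harmless). Under the control-theoretic reading behind Property~\ref{prop:1} it is immediate. Everything else is bookkeeping with the definition of $R_{f,\varepsilon}$, and the hypothesis $\Delta'\geq\Delta$ is used exactly once, to guarantee $T-k\Delta<\Delta\leq\Delta'$ in the final step.
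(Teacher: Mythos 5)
Your proof is correct and follows essentially the same route as the paper's: express the point as the endpoint of an $\varepsilon$-solution starting in $V$, observe that the solution returns to $V$ at every multiple of $\Delta$, and truncate to a residual time below $\Delta\leq\Delta'$. You are in fact somewhat more careful than the paper, which silently performs the concatenation of the two $\varepsilon$-solutions that you make explicit (and whose regularity caveat you rightly flag).
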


\begin{proof}
  Let $p\in R_{f,\varepsilon}(R_{f,\varepsilon}^{[0, \Delta']}(V))$. We prove that $p$ is also in $R_{f,\varepsilon}^{[0, \Delta']}(V)$.

  Let $\sigma$ be the $\varepsilon$-solution leading from a point in $V$ to $p$, and let $\tau$ be the length of this solution. So $\sigma(0)\in V$ and $\sigma(\tau)=p$.  Since $R_{f,\varepsilon}^{\Delta}(V)\subseteq V$, also $\sigma(\Delta)\in V$. In general, for every $i$, $\sigma(i\Delta)\in V$.  Let $b$ be the maximal $i$ with $i\Delta<\tau$. Then $0\leq \tau-b\Delta\leq\Delta\leq\Delta'$, and hence
$p\in R_{f,\varepsilon}^{\tau-b\Delta}(\{ \sigma(b\Delta) \})\subseteq R_{f,\varepsilon}^{[0, \Delta']}(\{ \sigma(b\Delta) \})\subseteq R_{f,\varepsilon}^{[0, \Delta']}(V)$. 
\end{proof}



Now we are ready to prove Theorem~\ref{thm:1}:

\begin{proof}
We assume that $(f, I, U)$ is robustly safe with robustness margin $\varepsilon$, and the complement of $U$ is bounded. We derive a contradiction from the additional assumption that  
 there exists $\Delta>0$ such that for all  $t\geq 0$, $R_{f,\frac{\varepsilon}{2}}^{[0,\Delta]}(R_{f,\varepsilon}^{[0,t]}(I))$ is no $\frac{\varepsilon}{2}$-robust safety certificate. 

 We first observe that for every $t\geq 0$, $R_{f,\varepsilon}^{[0,t]}(I)\not\supseteq R^\Delta_{f, \frac{\varepsilon}{2}}(R_{f,\varepsilon}^{[0,t]}(I))$. Otherwise,  due to Lemma~\ref{lem:2},
$R_{f,\frac{\varepsilon}{2}}(R_{f,\frac{\varepsilon}{2}}^{[0, \Delta]}(  R_{f,\varepsilon}^{[0,t]}(I)))\subseteq R_{f,\frac{\varepsilon}{2}}^{[0, \Delta]}(R_{f,\varepsilon}^{[0,t]}(I))$, and 
 $R_{f,\frac{\varepsilon}{2}}^{[0,\Delta]}(R_{f,\varepsilon}^{[0,t]}(I))$ would be a $\frac{\varepsilon}{2}$-robust safety certificate, which would be in contradiction to the assumption above.


We now derive a contradiction from the above by constructing an infinite sequence of points $p_1,\dots,$ such that 
\begin{itemize}
\item for every $i$, $p_i\in R_{f,\varepsilon}^{[0, \infty]}(I)$, and
\item for every $i,j$, $i\neq j$, we have $||p_j-p_i||>\delta$, 
\end{itemize}
with $\delta$ being a positive real number. This implies that the maximal distance of two points in the sequence grows over all bounds. Since the complement of $U$ is bounded, this is a contradiction to the fact that $(f, I, U)$ is robustly safe.

We ensure the second property (sufficient distance of points) by constructing at the same time a sequence $t_1,\dots,$ such that, for all $i$, 
\begin{itemize}
\item every point not $\varepsilon$-reachable in time $t_{i}$, that is, every point not in $R_{f,\varepsilon}^{[0,t_{i}]}(I)$ has distance at least $\delta$ from every point $p_1,\dots,p_{i}$ (in Figure~\ref{fig:proof}, the $\delta$-balls around $p_i$ are inside of the reach sets $R_{f,\varepsilon}^{[0,t_{i}]}(I)$).
\item $p_{i+1}$ is not $\varepsilon$-reachable in time $t_{i}$, that is, $p_{i+1}\not\in R_{f,\varepsilon}^{[0,t_{i}]}(I)$ (in Figure~\ref{fig:proof}, the points $p'=p_{i+1}$ are outside of the reach sets $R_{f,\varepsilon}^{[0,t_{i}]}(I)$).
\end{itemize}

\begin{figure}
  \centering
  \Preprint{\includegraphics[width=8cm,trim={1cm 3cm 5cm 0cm},clip]{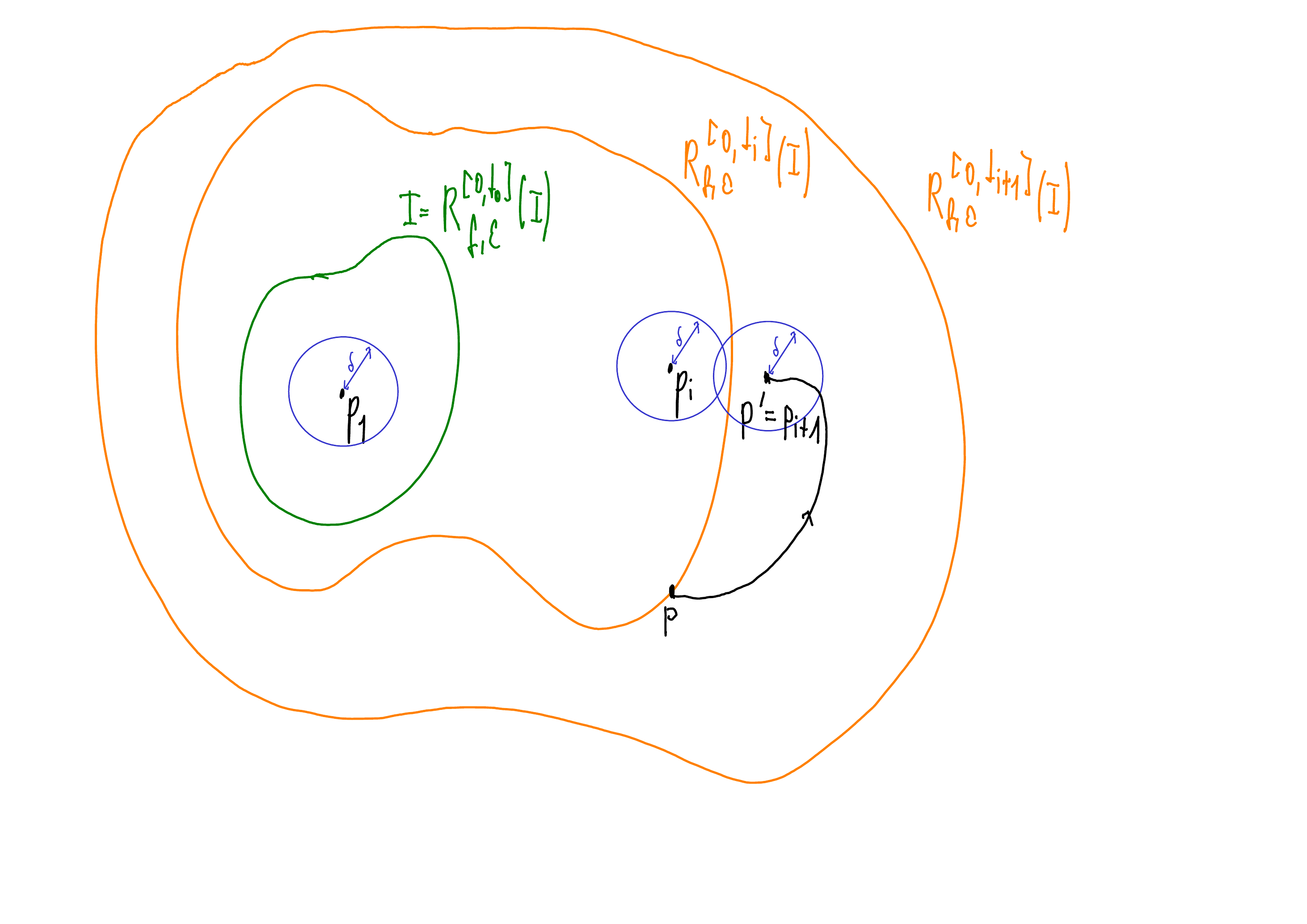}}{  \includegraphics[width=8.2cm,clip]{proof_new.pdf}}
\caption{Proof Intuition}
\label{fig:proof}
\end{figure}

It remains to construct the sequences $p_1,\dots,$ and $t_1,\dots,$ in such a way. Due to the Heine-Borel theorem, the closure of the complement of $U$ is compact. So we can apply Lemma~\ref{lem:1}: Let $\delta$ be as given by Lemma~\ref{lem:1} for $\Omega$ being the closure of the complement of $U$. Let $i$ be an arbitrary, but fixed natural number, and assume that the above properties hold for $p_1,\dots,p_i$ and $t_1,\dots, t_i$. We construct $p_{i+1}, t_{i+1}$ together with a proof that the above properties hold for them. 

For this, let $p\in R_{f,\varepsilon}^{[0,t_i]}(I)$, $p'\not\in R_{f,\varepsilon}^{[0,t_i]}(I)$ s.t. $p'$ is $\frac{\varepsilon}{2}$-reachable in time $\Delta$ from $p$ which exist since for all $t\geq 0$, $R_{f,\varepsilon}^{[0,t]}(I)\not\supseteq R^\Delta_{f, \frac{\varepsilon}{2}}(R_{f,\varepsilon}^{[0,t]}(I))$. By  Lemma~\ref{lem:1}  every point in the $\delta$-neighborhood of $p'$ is reachable in time $\Delta$ from $p\in R_{f,\varepsilon}^{[0,t_i]}(I)$ with $\varepsilon$-perturbed dynamics. 

So we choose $p_{i+1}=p', t_{i+1}= t_i+\Delta$. Then $p_{i+1}\not\in R_{f,\varepsilon}^{[0,t_{i}]}(I)$, and hence has distance at least $\delta$ from $p_1,\dots, p_i$. Moreover, every point not reachable in time $t_{i+1}$ again has distance at least $\delta$ from $p_1,\dots, p_{i+1}$.

We choose $t_1=0$ and $p_1$ an initial point that has distance at least $\delta$ from the boundary of the set of initial points $I$ (the case where such a point does not exist can be easily handled by shifting the sequence).






\end{proof}



\section{Conclusion}
\label{sec:conclusion}

In this paper we proved a robust converse theorem for barrier certificates, removing restrictions of such theorems in the literature. Moreover, we proved that a certain finite time reach set of a robustly safe system always forms a robust safety certificate. The literature contains several variants of the notion of a barrier certificate~\cite{Taly:09,Ghorbal:16}. We propose a comprehensive converse theory covering those variants as a goal for future work in the area. 

The actual computation of barrier certificates~\cite{Dai:17,Djaballah:17,Yang:15,Ratschan:17} is an essential question that deserves further attention.

\bibliographystyle{plain}
\bibliography{sratscha}


\end{document}